\newif\ifproof
\tikzset{state/.style={circle, draw, minimum size=0.5cm, initial distance=0.2cm}}
\algnewcommand{\LineComment}[1]{\Statex \(\triangleright\) #1}
\newcommand{\algmargin}{\the\ALG@thistlm}
\newcommand*{\rom}[1]{\expandafter\@slowromancap\romannumeral #1@}
\newlength{\whilewidth}
\algnewcommand{\parState}[1]{\State  \parbox[t]{\dimexpr\linewidth-\algmargin}{\strut #1\strut}}
\algnewcommand{\parRequire}[1]{\Require  \parbox[t]{\dimexpr\linewidth-\algmargin}{\strut #1\strut}}
\DeclareMathAlphabet{\mathpzc}{OT1}{pzc}{m}{it}
\def\delequal{\mathrel{\ensurestackMath{\stackon[1pt]{=}{\scriptstyle\Delta}}}}
\newtheorem{definition}{Definition}
\newtheorem{theorem}{Theorem}
\newtheorem{remark}{Remark}
\newtheorem{lemma}{Lemma}
\newtheorem{corollary}{Corollary}
\newtheorem{example}{Example}
\crefname{section}{Section}{Sections}
\crefname{subsection}{Section}{Sections}
\crefname{definition}{Definition}{Definitions}
\crefname{proposition}{Proposition}{Propositions}
\crefname{lemma}{Lemma}{Lemmas}
\crefname{theorem}{Theorem}{Theorems}
\crefname{corollary}{Corollary}{Corollaries}
\crefname{example}{Example}{Examples}
\crefname{figure}{Figure}{Figures}
\crefname{assumption}{Assumption}{Assumptions}
\crefname{remark}{Remark}{Remarks}
\crefname{running}{Running Example}{Running Examples}
\crefname{algorithm}{Algorithm}{Algorithms}
\newcommand\old[1]{{\color{gray} #1}}
\renewcommand\old[1]{}
\title{\LARGE \bf
Resiliency of Nonlinear Control Systems to Stealthy Sensor Attacks
}
\author{Amir Khazraei and Miroslav Pajic
\thanks{The authors are with the Department of Electrical \& Computer Engineering, Duke University, Durham, NC 27708. Email: {\tt\small \{amir.khazraei, miroslav.pajic\}@duke.edu.}}
\thanks{This work is sponsored in part by the ONR under agreement N00014-20-1-2745, AFOSR under the award number FA9550-19-1-0169, as well as by the NSF under CNS-1652544 award and the National AI Institute for Edge Computing Leveraging Next Generation Wireless Networks, Grant CNS-2112562.}
}
\begin{document}

\maketitle
\thispagestyle{empty}
\pagestyle{empty}

\begin{abstract}
In this work, we focus on analyzing vulnerability of nonlinear dynamical control systems to stealthy sensor attacks. We start by defining the notion of stealthy attacks in the most general form by leveraging Neyman-Pearson lemma; specifically, an attack is considered to be stealthy if it is stealthy from (i.e., undetected by) any intrusion detector -- i.e., the probability of the detection is not better than a random  guess.
We then provide a sufficient condition under which a nonlinear control system is vulnerable to stealthy attacks, in terms of moving the system to an unsafe region due to the attacks. In particular, we show that if the closed-loop system is incrementally exponentially stable while the open-loop 
plant is incrementally unstable, then the 
system is 
vulnerable to stealthy yet impactful attacks on sensors. Finally, we illustrate our results on a case study. 
\end{abstract}

\section{Introduction}  
\label{sec:intro}

Cyber-physical systems (CPS) are characterized by the tight integration of controllers and physical plants, potentially through communication networks. 
As such, they 
have been shown to be 
vulnerable to various types of cyber and physical attacks 
with disastrous impact (e.g.,~\cite{chen2011lessons}). Consequently, as part of the control design and analysis process, it is critical to identify early any vulnerability of the considered system to impactful attacks, especially the ones that are potentially stealthy to the deployed intrusion detection mechanisms.

Depending on 
attacker 
capabilities, different types of stealthy attacks have been proposed. For instance, when only sensor measurements can be compromised by the attacker, it has been shown that false data injection attacks are capable of significantly impacting the system while remaining undetected (i.e., stealthy) by a particular type of residual-based anomaly detectors (e.g.,~\cite{mo2010false,jovanov_tac19,kwon2014analysis,khazraei_automatica21,khazraei_acc20,zhang2020false,shang2021optimal}). 
For example, for linear time invariant (LTI) systems, if measurements from all sensors can be compromised , the plant's (i.e., open-loop) instability 
is a necessary and sufficient condition for the existence of impactful stealthy attacks. Similarly, for LTI systems with strictly proper transfer functions, the attacker that  compromises the control input can design effective  stealthy attacks if the system has unstable zero invariant (e.g.,~\cite{teixeira2012revealing,pasqualetti2013attack}); however, when the transfer function is not strictly proper, the attacker needs to compromise both plant's inputs and outputs. 
When the attacker compromises both the plant's actuation and sensing, 
e.g., ~\cite{sui2020vulnerability} derives the conditions under which the system is vulnerable to stealthy~attacks. 


However, the common assumption for all these results is that the considered plant is an LTI system. Furthermore, the notion of stealthiness is only characterized  for a \emph{specific type} of the employed intrusion detector (e.g., $\chi^2$-based detectors). In~\cite{bai2017kalman,bai2017data},  the notion of attack stealthiness is generalized, defining an attack as stealthy if it is stealthy from the best existing intrusion detector. In addition, the authors show that a sufficient condition for such notion of stealthiness is that the Kullback–Leibler (KL) divergence between the probability distribution of compromised system measurements and the attack-free measurements is close to zero, and consider stealthiness of such attacks on control systems with an LTI plant and an LQG controller. 


To the best of our knowledge, no existing work provides 
vulnerability analysis for systems with nonlinear dynamics, while considering general control and intrusion detector designs. 
In~\cite{smith2015covert}, covert attacks are introduced as stealthy attacks that can target a potentially nonlinear system. 
However, the attacker needs to have perfect knowledge of the system's dynamics and be able  to compromise \emph{both} the plant's input and outputs. Even more importantly, as the attack design is based on attacks on LTI systems, no guarantees are provided for effectiveness and stealthiness of attacks on nonlinear systems.
More recently, \cite{zhang2021stealthy} introduced stealthy attacks on a \emph{specific class} of nonlinear systems with residual-based intrusion detector, but provided effective attacks only when \emph{both} plant's inputs and outputs are compromised by the attacker. On the other hand, in this work, we assume the attacker can only compromise the  plant's sensing data and  consider systems with \emph{general} nonlinear dynamics. For systems with general nonlinear dynamics and residual-based intrusion detectors, machine learning-based methods to design the stealthy attacks have been introduced (e.g.,~\cite{khazraei2021learning}), but without any theoretical analysis and guarantees regarding the impact of the stealthy attacks. 

Consequently, in this work we provide conditions for existence of effective yet stealthy attacks on nonlinear systems without limiting the analysis on particular type of employed intrusion detectors. 
Our notion of attack stealthiness and system performance degradation is closely related to~\cite{khazraeReport21}. However, we extend these notions for systems with general nonlinear plants and controllers. To the best of our knowledge, this is the first work that considers the problem of stealthy impactful sensor attacks for systems with general nonlinear dynamics that is independent of the deployed intrusion detector.  The main contributions of the paper are twofold. First, we introduce the notions of \emph{strict} and \emph{$\epsilon$-stealthiness}. 
Second, using the well-known results for incremental stability introduced in~\cite{angeli2002lyapunov}, we derive  conditions for the existence of effective stealthy attacks that move the system into an unsafe operating region.  We show that if the closed-loop system is incrementally stable while the open-loop plant is incrementally unstable, then the closed-loop system is strictly vulnerable to stealthy sensing attacks.

The paper is organized as follows. In~\cref{sec:prelim}, we introduce preliminaries, whereas \cref{sec:motive} presents the system and attack model, before formalizing the notion of stealthiness in \cref{sec:stealthy}. 
\cref{sec:perfect} provides sufficient conditions for existence of the impactful yet stealthy attacks.
Finally, in  \cref{sec:simulation}, we illustrate our results on a case-study, before concluding remarks in \cref{sec:concl}. 

\paragraph*{Notation}
We use $\mathbb{R, Z}, \mathbb{Z}_{t\geq 0}$ to denote  the sets of reals, integers and non-negative integers, respectively, and $\mathbb{P}$ denotes the probability for a random variable.  For 
a square matrix $A$, $\lambda_{max}(A)$ denotes the maximum eigenvalue. 
For a vector $x\in{\mathbb{R}^n}$, $||x||_p$ denotes the $p$-norm of $x$; when $p$ is not specified, the 2-norm is implied. 
For a vector sequence, 
$x_0:x_t$ denotes the set $\{x_0,x_1,...,x_t\}$. 
A function $f:\mathbb{R}^{n}\to \mathbb{R}^{p}$ is Lipschitz with constant $L$ if for any $x,y\in \mathbb{R}^{n}$ it holds that $||f(x)-f(y)||\leq L ||x-y||$. 
%
Finally, if $\mathbf{P}$ and $\mathbf{Q}$ are probability distributions relative to Lebesgue measure with densities $\mathbf{p}$ and $\mathbf{q}$, respectively, then 
the Kullback–Leibler  (KL) divergence between $\mathbf{P}$ and $\mathbf{Q}$ is defined as
$KL(\mathbf{P},\mathbf{Q})=\int \mathbf{p}(x)\log{\frac{\mathbf{p}(x)}{\mathbf{q}(x)}}dx$.

\section{Preliminaries}\label{sec:prelim}

Let $\mathbb{X}\subseteq \mathbb{R}^n$ and $\mathbb{D}\subseteq \mathbb{R}^m$, with $0\in \mathbb{X}, \mathbb{D}$. Consider a discrete-time nonlinear system with an exogenous input, modeled in the state-space form as
\begin{equation}\label{eq:prilim}
x_{t+1}=f(x_t,d_t),\quad x_t\in \mathbb{X},\,\,t\in \mathbb{Z}_{t\geq 0},
\end{equation}
where $f:\mathbb{X}\times\mathbb{D}\to \mathbb{X}$ is continuous and $f(0,0)=0$. We denote by $x(t,\xi,d)$ 
the trajectory (i.e., the solution) of~\eqref{eq:prilim} at time $t$, when the system has the initial condition $\xi$ and is subject to the input sequence $\{d_0:d_{t-1}\}$.\footnote{To simplify our notation, we denote the sequence $\{d_0:d_{t-1}\}$ as $d$.} 

The following definitions are derived from~\cite{angeli2002lyapunov,tran2018convergence,tran2016incremental}.

\begin{definition}
The system~\eqref{eq:prilim} is incrementally exponentially stable (IES) in the set $\mathbb{X}\subseteq \mathbb{R}^n$ if there exist $\kappa>1$ and
$\lambda>1$ such that 
\begin{equation}
\Vert x(t,\xi_1,d)-x(t,\xi_2,d)\Vert \leq \kappa \Vert \xi_1-\xi_2\Vert \lambda^{-t},
\end{equation}
holds for all $\xi_1,\xi_2\in \mathbb{X}$, any $d_t\in \mathbb{D}$, and $t\in \mathbb{Z}_{t\geq 0}$. When $\mathbb{X}=\mathbb{R}^n$, the system is referred to as globally incrementally exponentially stable (GIES).
\end{definition}

\begin{definition}
The system~\eqref{eq:prilim} is incrementally  unstable (IU) in the set $\mathbb{X}\subseteq \mathbb{R}^n$ if for all $\xi_1\in \mathbb{X}$ and any $d_t\in \mathbb{D}$, there exists a $\xi_2$ such that for any $M>0$, 
\begin{equation}
\Vert x(t,\xi_1,d)-x(t,\xi_2,d)\Vert \geq M,
\end{equation}
holds for all $t\geq t'$, for some $t'\in \mathbb{Z}_{t\geq 0}$.
\end{definition}

\section{System Model} 
\label{sec:motive} 
In this section, we introduce the considered system and attack model, allowing us to formally capture the problem addressed in this work. 

\subsection{System and Attack Model} \label{sec:A} 
We consider the setup from \cref{fig:architecture} where each of the components is modeled as follows.

\subsubsection{Plant}
We assume that the states of the system evolve following a general nonlinear discrete-time dynamics that can be captured in the state-space form~as  
\begin{equation}\label{eq:plant}
\begin{split}
{x}_{t+1} &= f(x_t,u_t)+w_t,\\
y_t &= h(x_t)+v_t;
\end{split}
\end{equation}
here, $x \in {\mathbb{R}^n}$, $u \in {\mathbb{R}^m}$, $y \in {\mathbb{R}^p}$ are the state, input and output vectors of the plant, respectively. In addition, $f$ is a nonlinear mapping from previous time state and control input to the current state, and $h$ is the mapping from the states to the sensor measurements; we assume here that $h$ is Lipschitz with a constant $L_h$.  The plant output vector captures measurements from the set of plant sensors $\mathcal{S}$. 
Furthermore, $w \in {\mathbb{R}^{n}}$ and $v \in {\mathbb{R}^p}$ are the process and measurement noises that are assumed to be Gaussian with zero mean, and $\Sigma_w$ and $\Sigma_v$ covariance matrices, respectively. 

As we show later, it will be useful to consider the input to state relation of the dynamics~\eqref{eq:plant}; if we define $U=\begin{bmatrix}u^T&w^T\end{bmatrix}^T$, the first equation in~\eqref{eq:plant} becomes
\begin{equation}\label{eq:input-state}
x_{t+1}=f_u(x_t,U_t).
\end{equation}

\begin{figure}[!t]
\centering
\vspace{6pt}
\includegraphics[width=0.468\textwidth]{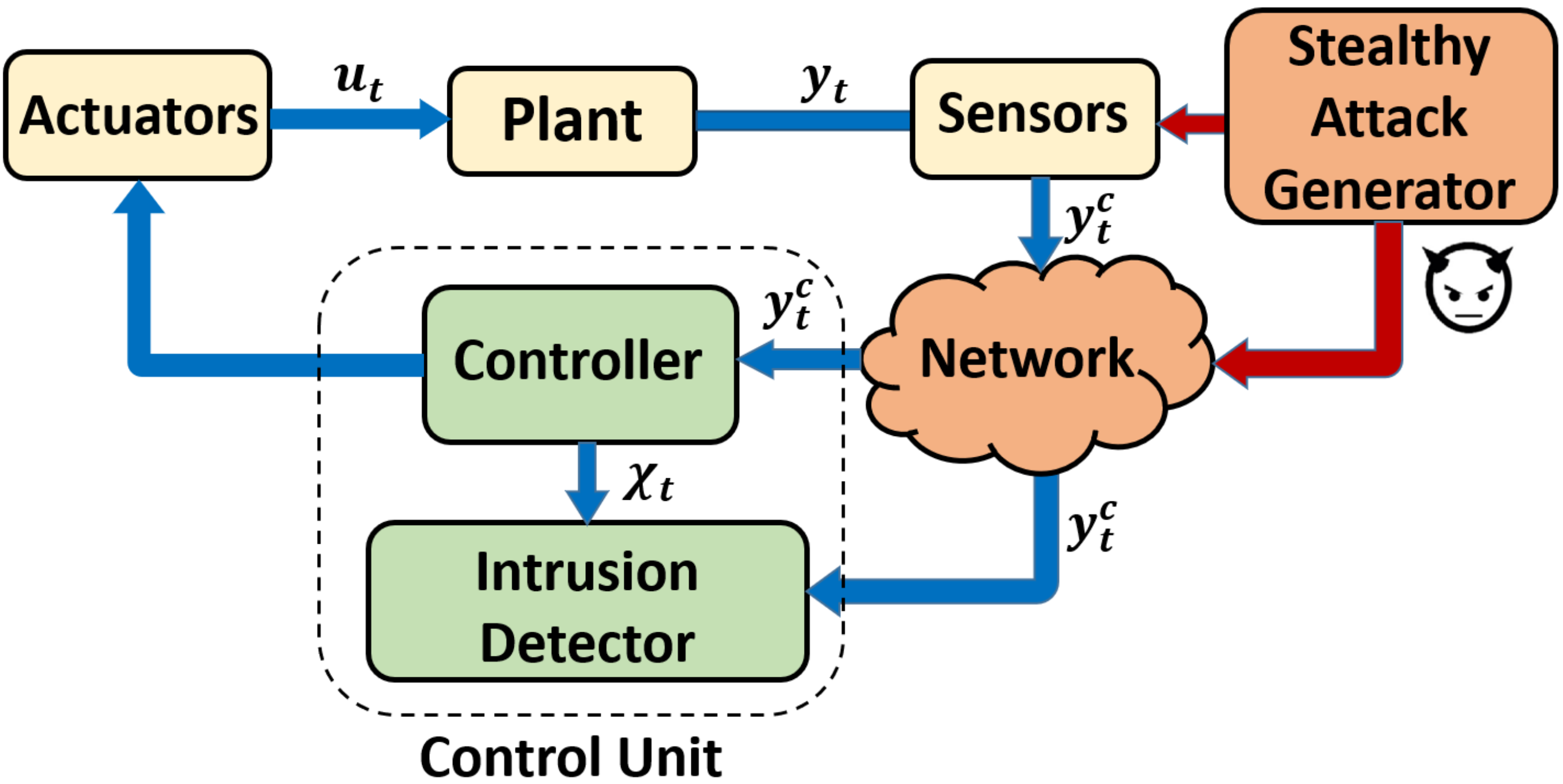}
\caption{Control system architecture considered in this work, in the presence of network-based attacks.}
\label{fig:architecture}
\end{figure}

\subsubsection{Control Unit}
The controller, illustrated in~\cref{fig:architecture}, is equipped with a feedback controller in the most general form, as well as an intrusion detector (ID). In what follows, we provide more details on the controller design. Intrusion detector will be discussed after introducing the attack model.

\paragraph*{Controller}
A large number of dynamical systems are intrinsically unstable or are designed to be unstable (e.g., if an aircraft is unstable, it is easier to change its altitude). Thus, it is critical to stabilize such systems using a proper controller. 
Due to their robustness to uncertainties, closed-loop controllers are utilized in 
most control systems. In the most general form, a feedback controller design can be captured in the state-space form~as 
\begin{equation}
\label{eq:plant_withoutB}
\begin{split}
\mathpzc{X}_{t} &= f_c(\mathpzc{X}_{t-1},y_t^c),\\
u_t  &= h_c(\mathpzc{X}_{t},y_t^c),
\end{split}
\end{equation}
where $\mathpzc{X}$ is the internal state of the controller, and $y^c$ captures the sensor measurements received by the controller. 
%
Thus, without malicious activity, it holds that $y^c=y$.\footnote{Here we assume that the employed communication network is reliable (e.g., wired).}
Note that the control model~\eqref{eq:plant_withoutB} is general, capturing for instance 
nonlinear filtering followed by a classic nonlinear controller (e.g., $f_c$ can model an extended Kalman filter and $h_c$ any full-state feedback controller).  

We define the full state of the closed-loop control system as $\mathbf{X}\delequal \begin{bmatrix}
{x}\\{\mathpzc{X}} \end{bmatrix}$, and exogenous disturbances as $\mathbf{W}\delequal \begin{bmatrix}
w\\{v} \end{bmatrix}$; then, the dynamics of the closed-loop system can be captured~as
\begin{equation}\label{eq:closed-loop}
\mathbf{X}_{t+1}=F(\mathbf{X}_t,\mathbf{W}_t).
\end{equation}

We assume that $\mathbf{X}=0$ is the operating point of the noiseless system (i.e., when $w=v=0$). {Moreover, we assume $f_c$ and $h_c$ are designed 
to keep the system within a safe region around the equilibrium point.} Here, without loss of generality, we define the safe region as $\mathbf{S}=\{x\in \mathbb{R}^n\,\,|\,\,\Vert x\Vert_2 \leq R_{\mathbf{S}}\}$, for some $R_{\mathbf{S}}>0$.

\subsubsection{Attack Model} \label{sec:attack_model}

We consider a sensor attack model where, for  sensors from the set $\mathcal{K}\subseteq{\mathcal{S}}$, 
the information delivered to the controller differs from the non-compromised sensor measurements. The attacker can achieve this via e.g., noninvasive attacks such sensor spoofing (e.g.,~\cite{kerns2014unmanned}) or by compromising information-flow from the sensors in $\mathcal{K}$ to the controller (e.g., as in network-based attacks~\cite{lesi_rtss17}). In either cases,  the attacker can launch false-date injection attacks, inserting a desired value instead of the current measurement of a compromised sensor.\footnote{We refer to sensors from $\mathcal{K}$ as compromised, even if a sensor itself is not directly
compromised but its measurements may be altered due to e.g., network-based attacks.}

Thus, assuming that the attack starts at time $t=0$, the sensor measurements delivered to the controller for $t\geq 0$ can be modeled~as~\cite{ncs_attack_model}
\begin{equation}\label{att:model}
y^{c,a}_t = y_t^a+a_t;
\end{equation}
here, $a_t\in {\mathbb{R}^p}$ denotes the attack signal injected by the attacker at time $t$ via the compromised sensors from $\mathcal{K}$, $y_t^a$ is the true sensing information (i.e., before the attack is injected at time $t$). 
In the rest of the paper we assume $\mathcal{K}=\mathcal{S}$; for some systems, we will discuss how the results can be generalized for the case when $\mathcal{K}\subset \mathcal{S}$. 

Note that since the controller uses the received sensing information to compute the input $u_t$, the compromised sensor values affect the evolution of the system and controller states. Hence, we add the superscript $a$ to denote any signal obtained from a compromised system 
-- e.g.,  thus, $y_t^a$ is used to denote before-attack sensor measurements when the system is under attack in~\eqref{att:model}, and we denote the closed-loop plant and controller state when the system is compromised as $\mathbf{X}^a\delequal \begin{bmatrix}
{x^a}\\{{\mathpzc{X}^a}} \end{bmatrix}$.

In this work, we consider the commonly adopted threat model as in majority of existing stealthy attack designs, e.g.,~\cite{mo2009secure,mo2010false,smith2015covert, khazraei_automatica21, jovanov_tac19}, where the attacker has full knowledge of the system, its dynamics and employed architecture. In addition, the attacker has the required computational power to calculate suitable attack signals to be injected,
while planning ahead as needed. 


Finally, the attacker's goal is to design an attack signal $a_t$, $t\geq 0$, such that it always remains \emph{stealthy} -- i.e., undetected by the intrusion detection system -- while \emph{maximizing control performance degradation}. The notions of \emph{stealthiness} and \emph{control performance degradation} depend on the employed control architecture, and thus will be formally defined after the controller and intrusion detection have been~introduced.


\subsubsection{Intrusion Detector} 
To detect system attacks (and anomalies), we assume that an intrusion detector (ID) is employed, analyzing the received sensor measurements and internal state of the controller. Specifically, by defining $Y\delequal \begin{bmatrix}
{y^c}\\{\mathpzc{X}} \end{bmatrix}$, 
as well as $Y^a\delequal \begin{bmatrix}
{y^{c,a}}\\{\mathpzc{X}^a} \end{bmatrix}$ when the system is under attack,
we assume that the intrusion detector has access to a sequence of values $Y_{-\infty}:Y_t$ until time $t$ 
and solves the binary hypothesis 
checking\\

$H_0$:  normal condition (the ID receives $Y_{-\infty}:Y_t$);~~

$H_1$: abnormal behaviour (receives 
$Y_{-\infty}:Y_{-1},Y_{0}^a:Y_t^a$).\footnote{Since the attack starts at $t=0$, we do not use superscript $a$ for the system evolution for $t<0$, as the trajectories of the non-compromised and compromised systems do not differ before the attack~starts.}
\\

Given a sequence of received data denoted by $\bar{Y}^t=\bar{Y}_{-\infty}:\bar{Y}_t$, it is either extracted from the distribution of the null hypothesis $H_0$, which we refer to as~$\mathbf{P}$, or from an \textbf{\emph{unknown}} distribution of the alternative hypothesis $H_1$, which we denote as~$\mathbf{Q}$.
{Note here that, for known noise profiles, the distribution $\mathbf{Q}$ is controlled by the injected attack signal.}

Defining the intrusion detector mapping as $D: \bar{Y}^t\to \{0,1\}$, two possible errors may occur. The error type ($\rom{1}$) known as \emph{false alarm}, occurs if $D(\bar{Y}^t)=1$ when $\bar{Y}^t \sim \mathbf{P}$ and error type ($\rom{2}$), also known as \emph{miss-detection}, occurs when $D(\bar{Y}^t)=0$ for $\bar{Y}^t \sim \mathbf{Q}$. Hence, we define 
the \underline{\emph{sum of conditional error probabilities}} of the intrusion detector for a given random sequence $\bar{Y}^t$, at time $t$~as
\begin{equation}
\label{eq:pe}
p_t^e=\mathbb{P}(D(\bar{Y}^t)=0|\bar{Y}^t\sim \mathbf{Q})+\mathbb{P}(D(\bar{Y}^t)=1|\bar{Y}^t \sim \mathbf{P}).
\end{equation}

Note that $p_t^e$ is not a probability measure as it can take values larger than one. However, it will be useful when we define the notion of stealthy attacks in the following~section.

\section{Formalizing Stealthy Attacks Requirements}
\label{sec:stealthy}
In this section, we capture the conditions for which an attack sequence is stealthy even from {{an optimal}} intrusion detector.  Specifically, we define an attack to be strictly stealthy if \emph{there exists no detector that can perform better than random guess between the two hypothesis}; by better we mean the true attack detection probability is higher than the false alarm probability. However, reaching such stealthiness guarantees may not be possible in general. Therefore, we define the notion of $\epsilon$-stealthiness, which as we will show later, is attainable for a large class of nonlinear systems. 

Before formally defining the notion of attack stealthiness, we introduce the following lemma. 

\begin{lemma}\label{lemma:stealthy}
Any intrusion detector $D$ 
cannot perform better than a random guess between the two hypothesis if and only if $p^e\geq 1$. 
Also, $p^e=1$ if and only if $D$ performs as well as a random guess detector. 
\end{lemma}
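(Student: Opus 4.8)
The plan is to reduce everything to the two conditional probabilities that appear in $p^e$ and to relate them to the text's definition of ``better than a random guess.'' First I would introduce the shorthand $P_D \delequal \mathbb{P}(D(\bar{Y}^t)=1\mid \bar{Y}^t\sim \mathbf{Q})$ for the (true) detection probability and $P_F \delequal \mathbb{P}(D(\bar{Y}^t)=1\mid \bar{Y}^t\sim \mathbf{P})$ for the false-alarm probability. Since $D$ takes values in $\{0,1\}$, the miss-detection term satisfies $\mathbb{P}(D(\bar{Y}^t)=0\mid \bar{Y}^t\sim \mathbf{Q})=1-P_D$, so that the quantity in~\eqref{eq:pe} can be rewritten as $p^e=(1-P_D)+P_F=1-(P_D-P_F)$. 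This single identity is the engine of the whole argument.

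Next I would recall that, per the definition given just before the lemma, a detector \emph{performs better than a random guess} exactly when its detection probability strictly exceeds its false-alarm probability, i.e.\ $P_D>P_F$, and \emph{performs as well as a random guess} when $P_D=P_F$. This last reading is consistent with any data-independent detector that outputs $1$ with some fixed probability $q$: such a detector has $P_D=P_F=q$, hence $P_D-P_F=0$, and its value of $p^e$ equals $1$ regardless of $q$, which pins down the benchmark against which ``better'' and ``as well as'' are measured.

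With the identity in hand, both equivalences follow by elementary manipulation and require no further probabilistic input. For the first claim, $D$ cannot perform better than a random guess is the negation of $P_D>P_F$, i.e.\ $P_D\leq P_F$, i.e.\ $P_D-P_F\leq 0$; substituting into $p^e=1-(P_D-P_F)$ yields $p^e\geq 1$, and conversely $p^e\geq 1$ forces $P_D-P_F\leq 0$, so the biconditional holds. For the second claim, $D$ performs as well as a random guess means $P_D=P_F$, i.e.\ $P_D-P_F=0$, which by the same substitution is equivalent to $p^e=1$.

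Because the argument is purely definitional, I do not anticipate a genuine mathematical obstacle; the only point requiring care is pinning down precisely what is meant by a ``random guess'' detector and verifying that the text's criterion (detection probability higher than false-alarm probability) is exactly the $P_D$ versus $P_F$ comparison used above, so that the strict-inequality case and the equality case partition correctly between the two statements of the lemma.
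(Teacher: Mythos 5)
Your proof is correct and follows essentially the same route as the paper's: both rewrite $p^e$ as $1-(P_D-P_F)$ via $\mathbb{P}(D=0\mid\mathbf{Q})=1-P_D$, identify a random guess with $P_D=P_F$, and read off the two equivalences from the sign of $P_D-P_F$. Your version is slightly tidier in arguing the biconditionals directly rather than case-by-case with an appeal to ``the reverse implications hold,'' but there is no substantive difference.
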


\begin{proof}
First, we consider the case $p^e> 1$. 
From~\eqref{eq:pe},
we have 
\begin{equation}
\begin{split}
1&< p^e=\mathbb{P}(D(\bar{Y})=0|\bar{Y}\sim \mathbf{Q})+\mathbb{P}(D(\bar{Y})=1|\bar{Y} \sim \mathbf{P})\\
&=1-\mathbb{P}(D(\bar{Y})=1|\bar{Y}\sim \mathbf{Q})+\mathbb{P}(D(\bar{Y})=1|\bar{Y} \sim \mathbf{P})\\
\end{split}
\end{equation}

Thus, $\mathbb{P}(D(\bar{Y})=1|\bar{Y}\sim \mathbf{Q})< \mathbb{P}(D(\bar{Y})=1|\bar{Y} \sim \mathbf{P})$. This means the probability of attack detection is less than the false alarm rate; therefore, $D$ is performing worse than random guess as in random guess we have $\mathbb{P}(D(\bar{Y})=1|\bar{Y}\sim \mathbf{Q})= \mathbb{P}(D(\bar{Y})=1|\bar{Y} \sim \mathbf{P})=\mathbb{P}(D(\bar{Y})=1)$ because random guess is independent of the given distribution. 
When the equality holds (i.e., $p^e=1$), it holds that $\mathbb{P}(D(\bar{Y})=1|\bar{Y}\sim \mathbf{Q})= \mathbb{P}(D(\bar{Y})=1|\bar{Y} \sim \mathbf{P})$ where the decision of the detector $D$ is independent of the distribution of $\bar{Y}$ and therefore, the detector performs as the random guess detector. 

Since the reverse of all these implications hold, 
the other (i.e., necessary) conditions of the theorem also hold.
\end{proof}
    
Now, using Lemma~\ref{lemma:stealthy}, we can define the notions of strict stealthiness and $\epsilon$-stealthiness as follows.

\begin{definition} \label{def:stealthiness}
Consider the system from~\eqref{eq:plant}. An attack sequence is 
\textbf{strictly stealthy} if there exists no detector such that the total error probability $p_t^e$ satisfies $p_t^e<1$, for any $t\in \mathbb{Z}_{\geq 0}$. 
An attack is
\textbf{$\epsilon$-stealthy} if for a given $\epsilon >0$, there exists no detector such that $p_t^e<1-\epsilon$, for any $t\in \mathbb{Z}_{\geq 0}$. 
\end{definition}

The following theorem uses Neyman-Pearson lemma to capture the condition for which the received sensor measurements 
satisfy the stealthiness condition in Definition~\ref{def:stealthiness}. 


\begin{theorem}[\cite{khazraeReport21}]
An attack sequence 
is 
\begin{itemize}
    \item 
    strictly stealthy if and only if  
    $KL\big(\mathbf{Q}(Y_{0}^a:Y_t^a)||\mathbf{P}(Y_{0}:Y_t)\big)=0$ for all $t\in \mathbb{Z}_{\geq 0}$, where $KL$ represents the Kullback–Leibler divergence operator.
    \item is $\epsilon$-stealthy if the corresponding observation sequence $Y_{0}^a:Y_t^a$ satisfies
    \begin{equation}\label{ineq:stealthiness}
        KL\big(\mathbf{Q}(Y_{0}^a:Y_t^a)||\mathbf{P}(Y_{0}:Y_t)\big)\leq \log(\frac{1}{1-\epsilon^2}).
    \end{equation}
\end{itemize}
\end{theorem}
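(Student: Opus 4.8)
The plan is to reduce both statements to a characterization of the best-possible detector performance via the Neyman--Pearson lemma, and then translate the resulting total-variation condition into the stated KL conditions. By \cref{lemma:stealthy}, an attack is stealthy exactly when no detector achieves $p_t^e<1$ (strict) or $p_t^e<1-\epsilon$ ($\epsilon$-stealthy); hence it suffices to compute $\inf_D p_t^e$, the smallest sum of conditional error probabilities attainable over all detectors at time $t$. Writing the rejection region of a detector $D$ as $R=\{\bar{Y}^t : D(\bar{Y}^t)=1\}$, from~\eqref{eq:pe} we have $p_t^e = \mathbf{P}(R)+1-\mathbf{Q}(R)=1-\big(\mathbf{Q}(R)-\mathbf{P}(R)\big)$, where $\mathbf{P},\mathbf{Q}$ are the joint laws of $Y_0:Y_t$ and $Y_0^a:Y_t^a$. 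Minimizing $p_t^e$ is therefore equivalent to maximizing $\mathbf{Q}(R)-\mathbf{P}(R)$, and the Neyman--Pearson lemma identifies the maximizer as the likelihood-ratio region $R^\star=\{\mathbf{q}>\mathbf{p}\}$, giving $\inf_D p_t^e = 1-\tfrac12\int|\mathbf{q}-\mathbf{p}| = 1-\norm{\mathbf{Q}-\mathbf{P}}_{TV}$.

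For the strict-stealthiness equivalence, the condition ``no detector achieves $p_t^e<1$'' is precisely $\inf_D p_t^e\geq 1$, i.e.\ $\norm{\mathbf{Q}-\mathbf{P}}_{TV}\leq 0$, which forces $\norm{\mathbf{Q}-\mathbf{P}}_{TV}=0$ and hence $\mathbf{Q}=\mathbf{P}$ almost everywhere. Since the KL divergence is nonnegative and vanishes if and only if the two densities agree a.e., this is equivalent to $KL\big(\mathbf{Q}(Y_0^a:Y_t^a)\,\|\,\mathbf{P}(Y_0:Y_t)\big)=0$; imposing it for every $t$ yields the claimed ``if and only if''.

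For $\epsilon$-stealthiness the requirement becomes $\inf_D p_t^e\geq 1-\epsilon$, i.e.\ $\norm{\mathbf{Q}-\mathbf{P}}_{TV}\leq \epsilon$. To obtain a sufficient condition phrased in KL, I would invoke the Bretagnolle--Huber inequality $\norm{\mathbf{Q}-\mathbf{P}}_{TV}\leq \sqrt{1-\exp\!\big(-KL(\mathbf{Q}\,\|\,\mathbf{P})\big)}$. Then the hypothesis $KL(\mathbf{Q}\,\|\,\mathbf{P})\leq \log\frac{1}{1-\epsilon^2}$ gives $\exp(-KL)\geq 1-\epsilon^2$, so $1-\exp(-KL)\leq \epsilon^2$ and therefore $\norm{\mathbf{Q}-\mathbf{P}}_{TV}\leq \epsilon$, establishing the sufficient direction. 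The asymmetry of the two claims---``iff'' for strict stealthiness but only a one-sided implication for $\epsilon$-stealthiness---matches the fact that the former rests on definiteness of the KL divergence while the latter rests on the one-sided Bretagnolle--Huber bound.

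The main obstacle I anticipate is the first paragraph: rigorously justifying that the likelihood-ratio test is optimal for the sum of error probabilities over the joint law of an entire (possibly growing, infinite-past) observation sequence, and that the optimal value equals $1-\norm{\mathbf{Q}-\mathbf{P}}_{TV}$. This requires that the densities $\mathbf{p},\mathbf{q}$ exist relative to a common dominating measure and that the supremum over detectors is attained by the Neyman--Pearson region (including the reduction of randomized tests to deterministic ones). Once this is settled, the KL-to-TV steps are routine given the standard definiteness and Bretagnolle--Huber inequalities.
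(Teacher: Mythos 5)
Your proposal is correct, and it is worth noting that the paper itself gives no proof of this statement -- the theorem is imported by citation from \cite{khazraeReport21} -- so there is no in-paper argument to compare against. Your route is the standard (and almost certainly the intended) one: the identity $\inf_D p_t^e = 1-\Vert \mathbf{Q}-\mathbf{P}\Vert_{TV}$ (Le Cam / Neyman--Pearson for the sum of error probabilities, where randomization cannot help since the objective is linear in the test), definiteness of KL for the strict case, and the Bretagnolle--Huber inequality $\Vert\mathbf{Q}-\mathbf{P}\Vert_{TV}\leq\sqrt{1-e^{-KL(\mathbf{Q}\Vert\mathbf{P})}}$ for the $\epsilon$ case; the threshold $\log\frac{1}{1-\epsilon^2}$ is exactly what that inequality produces, and your explanation of why one bullet is an equivalence and the other only a sufficient condition is the right one.

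The one genuine subtlety -- which you flag yourself but do not fully resolve -- is that the detector in this paper observes $Y_{-\infty}:Y_t$, whereas the KL in the theorem is taken only over the post-attack marginal $Y_0^a:Y_t^a$ versus $Y_0:Y_t$. Your first paragraph quietly identifies $\mathbf{P},\mathbf{Q}$ with these marginals, but the optimal test acts on the joint law including the pre-attack history, and equality (or closeness) of the post-attack marginals does not by itself control the joint law unless one also accounts for the conditional structure given the past. This is precisely why, in the proof of Theorem~2, the paper works with $KL\big(\mathbf{Q}(\mathbf{Z}_{-\infty}:\mathbf{Z}_{-1},\mathbf{Z}_0^e:\mathbf{Z}_t^e)\Vert\mathbf{P}(\mathbf{Z}_{-\infty}:\mathbf{Z}_{-1},\mathbf{Z}_0:\mathbf{Z}_t)\big)$ and peels off the attack-free past via the chain rule before bounding the conditional terms. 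A complete proof of the present theorem would either state the KL condition on the full observed sequence or argue that the chain-rule decomposition reduces it to the stated form; aside from making that reduction explicit (together with the existence of densities with respect to a common dominating measure, which holds here by the Gaussian noise assumption), your argument is sound.
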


\begin{remark}
The $\epsilon$-stealthiness condition defined in~\cite{bai2017data,bai2017kalman} requires $$\lim_{t\to \infty}\frac{KL\big(\mathbf{Q}(Y_{0}^a:Y_t^a)||\mathbf{P}(Y_{0}:Y_t)\big)}{t}\leq \epsilon.$$ 
This allows for the KL divergence to linearly increase over time for any $\epsilon>0$, and as a result, after large-enough time period the attack may be detected. On the other hand, our definition of $\epsilon$-stealthy 
only depends on $\epsilon$ and is fixed for any time $t$; thus, it introduces a stronger notion of stealthiness for the attack. 
\end{remark}
  
\subsection{Formalizing Attack Goal}\label{sec:attack_goal}

As previously discussed, the attacker intends to
\emph{maximize} degradation of control performance. 
Specifically, as 
we consider the origin as the operating point, we formalize 
the attack objective as \emph{maximizing  (the norm of) the states $x_t$}; i.e., moving the system's states into an unsafe region. Since there might be a zone between the safe and unsafe region, we define the the unsafe region as $\mathbf{U}=\{x\in \mathbb{R}^n\,\,|\,\,\Vert x\Vert_2 \geq \alpha\}$ for some $\alpha>R_{\mathbf{S}}$, where $R_{\mathbf{S}}$ is the radius of the safe region $\mathbf{S}$. 
Moreover, the attacker wants \emph{to remain stealthy (i.e., undetected by the intrusion detector)},
as formalized below.

\begin{definition}
\label{def:eps_alpha}
The attack sequence,
denoted by $\{a_{0}, a_{1},...\}$ 
is referred to as $(\epsilon,\alpha)$-successful attack if there exists $t'\in \mathbb{Z}_{\geq 0}$ such that $ \Vert x_{t'}^a \Vert \geq \alpha$ and the 
attack is $\epsilon$-stealthy 
for all $t\in \mathbb{Z}_{\geq 0}$.
When such a sequence exists for a system, the system is called $(\epsilon,\alpha)$-attackable. 
When the system 
is $(\epsilon,\alpha)$-attackable for arbitrarily large $\alpha$, the system is referred to as a perfectly attackable system.
\end{definition}


Now, the problem considered in this work can be formalized as 
capturing the potential impact of stealthy attacks on a considered system; specifically, in the next section,
we derive conditions for 
existence of a \emph{stealthy} yet \emph{effective} attack sequence
$a_{0}, a_{1},...$  resulting in
$\Vert x_t^a\Vert \geq \alpha$ for some $t\in \mathbb{Z}_{\geq 0}$ -- i.e., we find conditions for the system to be $(\epsilon,\alpha)$-attackable. 
Here, for an attack to be stealthy, we focus on 
the $\epsilon-$stealthy notion; 
i.e., that even the best 
intrusion detector could only improve the detection probability by $\epsilon$ compared to the random-guess baseline detector.

\section{Vulnerability Analysis of Nonlinear Systems to Stealthy Attacks} \label{sec:perfect}

In this section, we derive the conditions such that the nonlinear system~\eqref{eq:plant} with closed-loop dynamics~\eqref{eq:closed-loop} is vulnerable to effective stealthy attacks formally defined in Section~\ref{sec:stealthy}. The following theorem captures such condition.

\begin{theorem}
\label{thm:PAt}
The system~\eqref{eq:plant} is ($\epsilon,\alpha$)-attackable 
for arbitrarily large $\alpha$ and arbitrarily small $\epsilon$, if the closed-loop system~\eqref{eq:closed-loop} is incrementally exponentially stable 
(IES) in the set $\mathbf{S}$ and the system~\eqref{eq:input-state} is incrementally unstable 
(IU) in the set $\mathbf{S}$. 
\end{theorem}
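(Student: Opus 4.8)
The plan is to exhibit a single family of attacks, parameterized by a small initial perturbation $\delta$, and to show that its \emph{stealthiness} is governed by the closed-loop IES property while its \emph{impact} is governed by the open-loop IU property; $\delta$ can then be taken small enough for $\epsilon$-stealthiness yet still drive the true state past any $\alpha$. First I would introduce a \emph{virtual} plant trajectory $\check{x}_t$ initialized at $\check{x}_0 = x_0 + \delta$, and define the injected signal by $a_t = h(\check{x}_t) - h(x_t^a)$, so that the measurement handed to the controller is $y_t^{c,a} = h(\check{x}_t) + v_t$. The virtual plant is propagated by $\check{x}_{t+1} = f(\check{x}_t, u_t^a) + w_t$ using the \emph{same} control input $u_t^a$ (computed by the genuine controller from $y^{c,a}$) and the \emph{same} process noise $w_t$ that drive the true plant $x_{t+1}^a = f(x_t^a, u_t^a) + w_t$; this recursion is causal and well defined. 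Two structural facts then drive the proof: (i) the pair $(\check{x}_t, \mathpzc{X}_t^a)$ is a bona fide trajectory of the nominal closed loop $F$ excited by the actual noise $\mathbf{W}_t$, so the data seen by the detector, $Y_t^a = [(h(\check{x}_t)+v_t)^\top, (\mathpzc{X}_t^a)^\top]^\top$, is exactly a nominal observation sequence issued from a shifted initial condition; and (ii) the true and virtual plant states are two solutions of the input-to-state system \eqref{eq:input-state} under the identical input $U^a = \{[(u_t^a)^\top, w_t^\top]^\top\}$, differing only by their initial condition.

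For stealthiness I would exploit that $\mathbf{Q}$ and $\mathbf{P}$ are both laws of the nominal process from initial conditions differing only by $\delta$ in the plant component. Bounding $KL(\mathbf{Q}\,\|\,\mathbf{P})$ by the chain rule as a sum of one-step predictive KL terms, and using common noise covariances, each term is controlled by the squared gap of the predictive means, namely by $L_h^2\,\norm{\check{x}_k - x_k^{\mathrm{nom}}}^2$ up to the inverse measurement covariance, where $x^{\mathrm{nom}}$ is the attack-free plant trajectory from $x_0$. Since both benign trajectories remain in $\mathbf{S}$, IES of the closed loop gives $\norm{\check{x}_k - x_k^{\mathrm{nom}}} \le \kappa\,\norm{\delta}\,\lambda^{-k}$, so the series is geometric and $KL(\mathbf{Q}\,\|\,\mathbf{P}) \le C\,\norm{\delta}^2$ for all $t$ with $C$ independent of $t$. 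Choosing $\norm{\delta}$ small enough that $C\,\norm{\delta}^2 \le \log\frac{1}{1-\epsilon^2}$ yields $\epsilon$-stealthiness for any prescribed $\epsilon>0$, uniformly in $t$; crucially the detector never observes the diverging true state, only the bounded virtual one.

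For impact I would invoke incremental instability of \eqref{eq:input-state}: the true and virtual states evolve under the same input $U^a$ from initial conditions separated by $\delta$, so by IU their separation $\norm{x_t^a - \check{x}_t}$ eventually exceeds any $M$; since $\check{x}_t$ stays bounded in $\mathbf{S}$ by IES, $\norm{x_t^a}$ eventually exceeds any $\alpha$, producing an $(\epsilon,\alpha)$-successful attack for arbitrarily large $\alpha$ and arbitrarily small $\epsilon$ simultaneously.

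The hard part is this last step. The IU definition only guarantees that \emph{some} companion initial condition diverges from a given one under a \emph{fixed} input, whereas here the input $U^a$ is itself generated by the virtual trajectory and hence depends on $\delta$. Reconciling this self-referential coupling — ensuring the true initial condition lies in the divergent set for the self-consistently induced input, while keeping $\delta$ small enough for stealthiness — is the crux. In the linear case it is immediate, since the difference dynamics decouple from the input and reduce to the open-loop matrix, so any small perturbation along an unstable direction diverges; for the general nonlinear case I expect to need either a continuity/fixed-point argument or to extract from the IES/IU structure that the divergent directions are robust to the induced input, together with a check that both trajectories remain in $\mathbf{S}$ long enough for the incremental estimates to apply.
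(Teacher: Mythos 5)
Your construction is, up to notation, exactly the paper's proof: your virtual trajectory $\check{x}_t$ is the paper's $e_t = x_t^a - s_t$ (with $\delta = -s_0$), your injected signal $a_t = h(\check{x}_t) - h(x_t^a)$ is the attack generator \eqref{eq:attack_seq}, your observation that $(\check{x}_t,\mathpzc{X}_t^a)$ is a bona fide nominal closed-loop trajectory is \eqref{eq:closed_loop_attack}--\eqref{eq:closed_attack}, the IES-plus-chain-rule geometric KL bound is \eqref{ineq:1}--\eqref{eq:epsilon}, and the impact argument via IU of \eqref{eq:input-state} applied to $x^a$ and $\check{x}$ under the common input $U^a$ is \eqref{eq:alpha}--\eqref{eq:alpha_x_a}. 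So on the parts you carry out, you have reproduced the paper's argument faithfully.

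The ``hard part'' you flag at the end is real, and you should know that the paper does not resolve it: the proof simply invokes the IU definition with the induced input $U^a$ treated as if it were fixed and exogenous (``there exists a nonzero $s_0$ such that \eqref{eq:alpha} holds''), ignoring that $U^a$ is generated by the controller from $y^{c,a}=h(e_t)+v_t$ and hence itself depends on $s_0$. Nor does the paper reconcile the two quantifiers you identify as being in tension: the IU definition only asserts the \emph{existence} of some diverging companion initial condition $\xi_2$, with no guarantee that $\|s_0\|=\|\xi_1-\xi_2\|$ can be taken arbitrarily small, while the stealthiness bound \eqref{eq:epsilon} requires $\|s_0\|$ small to make $\epsilon$ small; the theorem statement claims both simultaneously. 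The paper's surrounding discussion acknowledges only that $s_0$ cannot be zero, not that the small-$s_0$ and diverging-$s_0$ requirements must be shown compatible. In the LTI case both issues evaporate for the reason you give (the difference dynamics decouple from the input and any small perturbation along an unstable eigenvector works), but for general nonlinear dynamics a continuity or fixed-point argument of the kind you sketch would indeed be needed. In short: your proposal is as complete as the paper's own proof, and the gap you have located is a gap in the published argument, not in your reconstruction of it.
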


\begin{proof}
Assume that the trajectory of the system and controller states for $t\in \mathbb{Z}_{<0}$ is denoted by {$\mathbf{X}_{-\infty}:\mathbf{X}_{-1}$}.
Following attack start at $t=0$, let us consider 
the evolutions of the system with and without attacks during $t\in \mathbb{Z}_{\geq 0}$. 
For the system under attack, starting at time zero, the  trajectory {$\mathbf{X}_{0}^a:\mathbf{X}_{t}^a$} 
of the system and controller states is governed by
\begin{equation}\label{eq:attack_trajec}
\begin{split}
x_{t+1}^a=&f(x_t^a,u_t^a)+ w_t,\quad y_t^{c,a}=h(x_t^a)+v_t+a_t\\
\mathpzc{X}_{t}^a=&f_c(\mathpzc{X}^a_{t-1},y_t^{c,a}),\quad u_t^a=h_c(\mathpzc{X}^a_{t},y_t^{c,a}).\\
\end{split}
\end{equation}

On the other hand, if the system were not under attack 
during $t\in \mathbb{Z}_{\geq 0}$,  we denote the plant and controller state evolution by {$\mathbf{X}_{0}:\mathbf{X}_{t}$}. Hence, it is a continuation of the system trajectories {$\mathbf{X}_{-\infty}:\mathbf{X}_{-1}$} if hypothetically no data-injection attack occurs during $t\in \mathbb{Z}_{\geq 0}$. Since the system and measurement noises are independent of the state, we can assume that $w_t^a=w_t$ and $v_t^{a}=v_t$. In this case, the dynamics of the plant and controller state evolution satisfies
\begin{equation}\label{eq:free_trajec}
\begin{split}
x_{t+1}=&f(x_t,u_t)+ w_t,\quad y_t^c=h(x_t)+v_t,\\
\mathpzc{X}_{t}=&f_c(\mathpzc{X}_{t-1},y_t^c),\quad u_t=h_c(\mathpzc{X}_{t},y_t^c),\\
\end{split}
\end{equation}
which can be captured in the compact form~\eqref{eq:closed-loop}, 
with 
$\mathbf{X}_{0}=\begin{bmatrix}x_{0}\\\mathpzc{X}_{0}\end{bmatrix}$. 

Now, consider the sequence of attack vectors injected in the system from~\eqref{eq:attack_trajec}, which are constructed using the following dynamical model
\begin{equation}\label{eq:attack_seq}
\begin{split}
s_{t+1}&=f(x_t^a,u_t^a) -   f(x_t^a-s_t,u_t^a) \\
a_t&=h(x_t^a-s_t)-h(x_t^a),
\end{split}
\end{equation}
for $t\in \mathbb{Z}_{\geq 0}$, and with some arbitrarily chosen nonzero initial value of $s_0$. By injecting the above attack sequence into the sensor measurements, we can verify that $y_t^{c,a}=h(x_t^a)+v_t+a_t=h(x_t^a-s_t)+v_t$. After defining 
\begin{equation}
e_t\delequal x_t^a-s_t,
\end{equation}
and combining~\eqref{eq:attack_seq} with~\eqref{eq:attack_trajec}, the dynamics of $e_t$ and the controller, and the corresponding input and output satisfy
\begin{equation}\label{eq:closed_loop_attack}
\begin{split}
e_{t+1}=&f(e_t,u_t^a)+ w_t,\quad y_t^{c,a}=h(e_t)+v_t,\\
\mathpzc{X}_{t}^a=&f_c(\mathpzc{X}^a_{t-1},y_t^{c,a}), \quad u_t^a=h_c(\mathpzc{X}^a_{t},y_t^{c,a}),
\end{split}
\end{equation}
with the initial condition $e_0=x_0^a-s_0$. 

Now, if we define $\mathbf{X}^e_t=\begin{bmatrix}e_{t}\\\mathpzc{X}_{t}^a\end{bmatrix}$, it holds that
\begin{equation}\label{eq:closed_attack}
\mathbf{X}^e_{t+1}=F(\mathbf{X}^e_{t},\mathbf{W}_t).
\end{equation}
with $\mathbf{X}^e_{0}=\begin{bmatrix}e_{0}\\\mathpzc{X}_{0}^a\end{bmatrix}$. Since we have that $x_0^a=x_0$ and $\mathpzc{X}_{0}^a=\mathpzc{X}_{0}$, it holds that $\mathbf{X}_{0}-\mathbf{X}^e_{0}=\begin{bmatrix}s_{0}\\0\end{bmatrix}$. On the other hand, since  both~\eqref{eq:closed_attack} and~\eqref{eq:closed-loop} share the same function and argument $\mathbf{W}_t$, 
the closed-loop system~\eqref{eq:closed_attack} is 
IES, and it also follows that
\begin{equation}\label{eq:error_bound}
\begin{split}
\Vert \mathbf{X}(t,\mathbf{X}_{0},\mathbf{W})-\mathbf{X}^e(t,\mathbf{X}^e_{0},\mathbf{W})\Vert &\leq \kappa \Vert \mathbf{X}_{0}-\mathbf{X}^e_{0}\Vert \lambda^{-t}\\
&\leq \kappa \Vert s_0\Vert \lambda^{-t};
\end{split}
\end{equation}
therefore, the trajectories of $\mathbf{X}$ (i.e., the system without attack)  and $\mathbf{X}^e$ converge to each other exponentially fast. 

We now use these results to show that the generated attack sequence satisfies the $\epsilon$-stealthiness condition. By defining $\mathbf{Z}_{t}=\begin{bmatrix}x_{t}\\y_t^{c}\end{bmatrix}$ and $\mathbf{Z}_{t}^e=\begin{bmatrix}e_{t}\\y_t^{c,a}\end{bmatrix}$, it holds that
\begin{equation}\label{ineq:1}
\begin{split}
&KL\big(\mathbf{Q}(Y_{0}^a:Y_t^a)||\mathbf{P}(Y_{0}:Y_t)\big) \\
&\stackrel{(i)}\leq KL\big(\mathbf{Q}(\mathbf{X}_{0}^e:\mathbf{X}_{t}^e)||\mathbf{P}(\mathbf{X}_{0}:\mathbf{X}_{t})\big) \\
&\stackrel{(ii)}\leq KL\big(\mathbf{Q}(\mathbf{Z}_{-\infty}:\mathbf{Z}_{-1},\mathbf{Z}_{0}^e:\mathbf{Z}_{t}^e)||\mathbf{P}(\mathbf{Z}_{-\infty}:\mathbf{Z}_{-1},\mathbf{Z}_{0}:\mathbf{Z}_{t})\big),
\end{split}
\end{equation}
where we applied the data-processing inequality property of KL-divergence for $t\in \mathbb{Z}_{\geq 0}$ to obtain $(i)$, and the monotonicity property of KL-divergence to obtain the inequality $(ii)$.\footnote{Due to the space limitation, we do not introduce data-processing, chain-rule, and monotonicity properties of KL-divergence. More information about these terms can be found in~\cite{thomas2006elements}.}  Then, we apply the chain-rule property of KL-divergence on the right-hand side of 
\eqref{ineq:1} to obtain the following
\begin{equation}\label{ineq:2}
\begin{split}
&KL\big(\mathbf{Q}(\mathbf{Z}_{-\infty}:\mathbf{Z}_{-1},\mathbf{Z}_{0}^e:\mathbf{Z}_{t}^e)||\mathbf{P}(\mathbf{Z}_{-\infty}:\mathbf{Z}_{-1},\mathbf{Z}_{0}:\mathbf{Z}_{t})\big)\\
&= KL\big(\mathbf{Q}(\mathbf{Z}_{-\infty}:\mathbf{Z}_{-1})||\mathbf{P}(\mathbf{Z}_{-\infty}:\mathbf{Z}_{-1})\big)+\\
&\,\,\,\,\,\,\,KL\big(\mathbf{Q}(\mathbf{Z}_{0}^e:\mathbf{Z}_{t}^e|\mathbf{Z}_{-\infty}:\mathbf{Z}_{-1})||\mathbf{P}(\mathbf{Z}_{0}:\mathbf{Z}_{t}|\mathbf{Z}_{-\infty}:\mathbf{Z}_{-1})\big)\\
&=KL\big(\mathbf{Q}(\mathbf{Z}_{0}^e:\mathbf{Z}_{t}^e|\mathbf{Z}_{-\infty}:\mathbf{Z}_{-1})||\mathbf{P}(\mathbf{Z}_{0}:\mathbf{Z}_{t}|\mathbf{Z}_{-\infty}:\mathbf{Z}_{-1})\big);
\end{split}
\end{equation}
here, we used the fact that the 
KL-divergence of two identical distributions (i.e., $\mathbf{Q}(\mathbf{Z}_{-\infty}:\mathbf{Z}_{-1})$ and $\mathbf{P}(\mathbf{Z}_{-\infty}:\mathbf{Z}_{-1})$ since the system is not under attack for $t<0$) 
is zero.

Applying the chain-rule property of 
KL-divergence to~\eqref{ineq:2} results in
\begin{equation}\label{ineq:3}
\begin{split}
K&L\big(\mathbf{Q}(\mathbf{Z}_{0}^e:\mathbf{Z}_{t}^e|\mathbf{Z}_{-\infty}:\mathbf{Z}_{-1})||\mathbf{P}(\mathbf{Z}_{0}:\mathbf{Z}_{t}|\mathbf{Z}_{-\infty}:\mathbf{Z}_{-1})\big)\\
&\leq KL\big(\mathbf{Q}(e_0|\mathbf{Z}_{-\infty}:\mathbf{Z}_{-1})||\mathbf{P}(x_{0}|\mathbf{Z}_{-\infty}:\mathbf{Z}_{-1})\big)\\
&+ KL\big(\mathbf{Q}(y_0^{c,a}|e_0,\mathbf{Z}_{-\infty}:\mathbf{Z}_{-1})||\mathbf{P}(y_{0}|x_0,\mathbf{Z}_{-\infty}:\mathbf{Z}_{-1})\big)\\
&+...+ KL\big(\mathbf{Q}(e_t|\mathbf{Z}_{-\infty}:\mathbf{Z}_{t-1}^e)||\mathbf{P}(x_{t}|\mathbf{Z}_{-\infty}:\mathbf{Z}_{t-1})\big)\\
&+ KL\big(\mathbf{Q}(y_t^{c,a}|e_t,\mathbf{Z}_{-\infty}:\mathbf{Z}^e_{t-1})||\mathbf{P}(y_{t}|x_t,\mathbf{Z}_{-\infty}:\mathbf{Z}_{t-1})\big).
\end{split}
\end{equation}

Given $\mathbf{Z}_{-\infty}:\mathbf{Z}_{t-1}$, the distribution of $x_t$ is a Gaussian with mean $f(x_{t-1},u_{t-1})$ and covariance $\Sigma_w$. Similarly given  $\mathbf{Z}_{-\infty}:\mathbf{Z}_{-1},\mathbf{Z}_{0}^e:\mathbf{Z}_{t-1}^e$,  the distribution of $e_t$ is a Gaussian with mean $f(e_{t-1},u_{t-1}^a)$ and covariance $\Sigma_w$. Since we have that $x_t=f(x_{t-1},u_{t-1})+w_t$ and $e_t=f(e_{t-1},u_{t-1}^a)+w_t$ according to~\eqref{eq:free_trajec} and~\eqref{eq:closed_loop_attack}, it holds that $f(x_{t-1},u_{t-1})-f(e_{t-1},u_{t-1}^a)=x_t-e_t$. On the other hand, in~\eqref{eq:error_bound} we showed that $\Vert x_t-e_t\Vert \leq \kappa \Vert s_0\Vert \lambda^{-t}$ holds for $t\in \mathbb{Z}_{\geq 0}$. Therefore, for all $t\in \mathbb{Z}_{\geq 0}$, it holds that
\begin{equation}\label{ineq:4}
\begin{split}
KL\big(\mathbf{Q}(e_t|\mathbf{Z}_{-\infty}:&\mathbf{Z}_{t-1}^e)||\mathbf{P}(x_{t}|\mathbf{Z}_{-\infty}:\mathbf{Z}_{t-1})\big)= \\
=& (x_t-e_t)^T \Sigma_w^{-1}  (x_t-e_t) 
\\\leq& \kappa^2 \Vert s_0\Vert^2 \lambda^{-2t} \lambda_{max}(\Sigma_w^{-1}),
\end{split}
\end{equation}
where $\lambda_{max}(\Sigma_w^{-1})$ is the maximum eigenvalue of the matrix~$\Sigma_w^{-1}$. 

Now, using the Markov property it holds that $\mathbf{Q}(y_t^{c,a}|e_t,\mathbf{Z}_{-\infty}:\mathbf{Z}^e_{t-1})=\mathbf{Q}(y_t^{c,a}|e_t)$ and $\mathbf{P}(y_{t}|x_t,\mathbf{Z}_{-\infty}:\mathbf{Z}_{t-1})=\mathbf{P}(y_{t}|x_t)$; 
also, from~\eqref{eq:free_trajec} and~\eqref{eq:closed_loop_attack} it holds that given $x_t$ and $e_t$, $\mathbf{P}(y_{t}|x_t)$ and $\mathbf{Q}(y_t^{c,a}|e_t)$ are both Gaussian with mean $h(x_t)$ and $h(e_t)$, respectively and covariance $\Sigma_v$. Thus, it follows that
\begin{equation}\label{ineq:5}
\begin{split}
KL\big(\mathbf{Q}&(y_t^{c,a}|e_t)||\mathbf{P}(y_{t}|x_t)\big)\\
&=\big(h(x_t)-h(e_t)\big)^T \Sigma_v^{-1}  \big(h(x_t)-h(e_t)\big)\\
&\leq L_h^2 (x_t-e_t)^T \Sigma_v^{-1}  (x_t-e_t)\\
&\leq L_h^2 \kappa^2 \Vert s_0\Vert^2 \lambda^{-2t} \lambda_{max}(\Sigma_v^{-1}).
\end{split}
\end{equation}

Combining~\eqref{ineq:1}-\eqref{ineq:5} results in 
\begin{equation}\label{eq:epsilon}
\begin{split}
&KL\big(\mathbf{Q}(Y_{0}^a:Y_t^a)||\mathbf{P}(Y_{0}:Y_t)\big) \leq \\
&\sum_{i=0}^t \kappa^2 \Vert s_0\Vert^2 \lambda^{-2t} \lambda_{max}(\Sigma_w^{-1})+L_h^2 \kappa^2 \Vert s_0\Vert^2 \lambda^{-2t} \lambda_{max}(\Sigma_v^{-1})\\
&\leq \frac{\kappa^2 \Vert s_0\Vert^2}{1-\lambda^2} \big(\lambda_{max}(\Sigma_w^{-1})+L_h^2 \lambda_{max}(\Sigma_v^{-1})\big)\delequal b_{\epsilon}.
\end{split}
\end{equation}
Finally, with $b_{\epsilon}$ defined as in~\eqref{eq:epsilon}, the attack sequence defined in~\eqref{eq:attack_seq} satisfies the $\epsilon$-stealthiness condition with $\epsilon=\sqrt{1-e^{-b_{\epsilon}}}$. 

We now show that the proposed attack sequence is effective; i.e., there exists $t'\in \mathbb{Z}_{\geq 0}$ such that $\Vert x_{t'}^a\Vert \geq \alpha$ for arbitrarily large $\alpha$. To achieve this, consider the two dynamics from~\eqref{eq:attack_trajec} and~\eqref{eq:closed_loop_attack} for any $t\in \mathbb{Z}_{\geq 0}$
\begin{equation}
\begin{split}
x_{t+1}^a=&f(x_t^a,u_t^a)+ w_t=f_u(x_t^a,U_t^a)\\
e_{t+1}=&f(e_t,u_t^a)+ w_t=f_u(e_t,U_t^a)
\end{split}
\end{equation}
with $U_t^a=\begin{bmatrix}{u_t^a}^T&w_t^T\end{bmatrix}^T$, for $t\in \mathbb{Z}_{\geq 0}$. Since we assumed that the open-loop system~\eqref{eq:input-state} is 
IU on the set $\mathbf{S}$, it holds that for all $x_0^a=x_0\in \mathbf{S}$, there exits a nonzero $s_0$ such that for any $M>0$ 
\begin{equation}\label{eq:alpha}
\Vert x^a(t,x_0^a,U^a)-e(t,x_0^a-s_0,U^a)\Vert \geq M
\end{equation}
holds in $t\geq t'$, for some $t'\in \mathbb{Z}_{\geq 0}$. 

On the other hand, we showed in~\eqref{eq:error_bound} that $\Vert x(t,x_0,U)-e(t,x_0^a-s_0,U^a)\Vert \leq \kappa \Vert s_0\Vert \lambda^{-t}$. Combining this with~\eqref{eq:alpha} and using the fact that $\Vert x(t,x_0,U)\Vert \leq R_{\mathbf{S}}$ results in
\begin{equation}\label{eq:alpha_x_a}
\begin{split}
&\Vert x^a(t,x_0^a,U^a)-x(t,x_0-s_0,U)\Vert = \\
&\Vert x^a(t,x_0^a,U^a)-e(t,x_0^a-s_0,U^a)+e(t,x_0^a-s_0,U^a)\\&-x(t,x_0-s_0,U)\Vert \geq \Vert x^a(t,x_0^a,U^a)-e(t,x_0^a-s_0,U^a)\Vert \\
& -\Vert e(t,x_0^a-s_0,U^a)-x(t,x_0-s_0,U)\Vert 
\geq M-\kappa \Vert s_0\Vert \lambda^{-t} \\
&\Rightarrow \Vert x^a(t,x_0^a,U^a)\Vert \geq M-\kappa \Vert s_0\Vert \lambda^{-t}-R_{\mathbf{S}}\\
&\qquad \qquad \qquad \qquad  \geq  M-\kappa \Vert s_0\Vert -R_{\mathbf{S}}.
\end{split}
\end{equation}

Since $M$ is arbitrarily, we can choose it to satisfy $M>\alpha+R_s+\kappa \Vert s_0\Vert$, for arbitrarily large $\alpha$. Thus, the system is $(\epsilon,\alpha)$-attackable.
\end{proof}

From~\eqref{eq:closed_loop_attack}, we can see that the false sensor measurements are generated by the evolution of $e_t$. Therefore, intuitively, the attacker wants to fool the system into believing that $e_t$ is the actual state of the system instead of $x_t^a$. Since $e_t$ and $x_t$ (i.e., the system state if no attack occurs during $t\in \mathbb{Z}_{\geq 0}$) converge to each other exponentially fast, the idea is that the system almost believes that $x_t$ is the system state (under attack), while the actual state $x_t^a$ becomes arbitrarily large.

Furthermore,
all parameters $\kappa$, $ \lambda$, $L_h$, $\Sigma_w$, and $\Sigma_v$ in~\eqref{eq:epsilon} are some constants that depend either on system properties ($L_h$, $\Sigma_w$, and $\Sigma_v$) or are determined by the controller design ($\kappa$, $ \lambda$). However, $s_0$ is set by the attacker, and \emph{it can be chosen arbitrarily small to make $\epsilon$ arbitrarily close to zero}.   Yet, 
$s_0$ can not be equal to zero; in that case~\eqref{eq:alpha} would not hold --  i.e., the attack would not not be impactful. Therefore, as opposed to attack methods targeting the prediction covariance in~\cite{bai2017kalman} where the attack impact linearly changes with $\epsilon$, here arbitrarily large $\alpha$ (high impact attacks) can be achieved  even with an arbitrarily small $\epsilon$ -- it may only take more time to get to $\Vert x^a_{t'}\Vert \geq \alpha$.

\begin{remark}
Even though we assumed that the closed-loop dynamics is 
IES, slightly weaker results can still be obtained for closed-loop dynamics with incrementally asymptotic stability. 
We will consider this case as future work. 
\end{remark}

\begin{remark}
For constructing the attack sequence in~\eqref{eq:attack_seq} we assumed that the attacker has knowledge of the system's nonlinear functions $f$ and $h$, as well as has access to the values of the system state. 
In future work, we will show how these assumptions 
can be relaxed for systems with general nonlinear dynamics.
\end{remark}

\begin{remark}
In case that either $w=0$ or $v=0$ (i.e., when there is no process or measurement noise), one can still get a similar bound on the KL-divergence only as a function of the nonzero noise covariance by applying monotonicity and data-processing inequality. However, ensuring stealthiness requirement is not possible if both $w=0$ and $v=0$ (i.e., for the noiseless system), as the system would be completely  deterministic, and thus theoretically any small perturbation to the sensor measurements could be detected.
\end{remark}

\subsection{Vulnerability Analysis of LTI Systems} 

Theorem~\ref{thm:PAt} can also be applied to find the condition for the existence of ($\epsilon,\alpha$)-successful attacks on LTI systems. Specifically, the LTI 
formulation of~\eqref{eq:plant} and~\eqref{eq:plant_withoutB} is 
\begin{equation}\label{eq:lti}
\begin{split}
{x}_{t+1} &= Ax_t+Bu_t+w_t,\quad y_t^c = Cx_t+v_t,\\
\mathpzc{X}_{t} &= A_c\mathpzc{X}_{t-1}+B_cy_t^c,\quad u_t  = C_c\mathpzc{X}_{t};\\
\end{split}    
\end{equation}
LTI systems with any controller (e.g., LQG controllers) can be captured in the above form. 
The following lemma provides the conditions for IES and IU for the above 
LTI~system.

\begin{lemma}\label{lemma:LTI}
Consider the LTI dynamical system in the form of $x_{t+1}=Ax_t+Bd_t$. The system is IES if and only if all eigenvalues of the matrix $A$ are inside the unit circle. The system is 
IU if and only if $A$ has an unstable eigenvalue. 
\end{lemma}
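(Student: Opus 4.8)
The plan is to reduce both claims to the behavior of the matrix powers $A^t$, since the exogenous input drops out when one compares two trajectories. Concretely, for any input sequence $d$ and any two initial conditions $\xi_1,\xi_2$, the difference $\Delta x_t \delequal x(t,\xi_1,d)-x(t,\xi_2,d)$ obeys the linear recursion $\Delta x_{t+1}=A\Delta x_t$ (the $Bd_t$ terms cancel), so that $\Delta x_t = A^t\Delta x_0$ with $\Delta x_0=\xi_1-\xi_2$. Thus the incremental stability properties of the LTI system are governed entirely by the spectral properties of $A$, and both equivalences become statements about the asymptotics of $\|A^t\|$ and of $\|A^t v\|$ for individual vectors $v$.

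For the IES characterization, I would observe that the defining inequality $\|\Delta x_t\|\le \kappa\|\xi_1-\xi_2\|\lambda^{-t}$ holding for all $\xi_1,\xi_2$ is equivalent to the operator-norm bound $\|A^t\|\le \kappa\lambda^{-t}$ for some $\kappa>1$, $\lambda>1$; that is, $\|A^t\|$ must decay exponentially. I would establish the equivalence with $\rho(A)<1$ (all eigenvalues strictly inside the unit circle) via the Jordan canonical form $A=PJP^{-1}$, for which $A^t=PJ^tP^{-1}$ and the entries of $J^t$ are governed by terms of the form $\binom{t}{k}\lambda^{t-k}$. These decay geometrically precisely when every $|\lambda|<1$, giving the bound with $\lambda$ taken as any number strictly between $1/\max_i|\lambda_i|$ and $1$; conversely, if some eigenvalue satisfies $|\lambda|\ge 1$, choosing $\Delta x_0$ along the corresponding (real) eigenvector produces a component that does not decay, contradicting IES. (The spectral-radius formula $\rho(A)=\lim_t\|A^t\|^{1/t}$ gives the same conclusion more compactly and could be cited instead.)

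For the IU characterization, the key reduction is that, since the difference dynamics is independent of $\xi_1$ and depends only on $\Delta x_0$, the quantifier ``for all $\xi_1$ there exists $\xi_2$'' collapses to the existence of a single direction $v=\Delta x_0\neq 0$ with $\|A^t v\|\to\infty$. If $A$ has an unstable eigenvalue $|\lambda|>1$, I would take $v$ in the (real form of the) associated eigenspace, so that $\|A^t v\|$ grows like $|\lambda|^t$ and eventually exceeds any $M$, verifying IU. Conversely, if $A$ has no unstable eigenvalue, the same Jordan-form estimate shows $\|A^t v\|$ stays bounded for every $v$, so no diverging difference trajectory exists and the system is not IU.

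The main obstacle I anticipate is the borderline case of eigenvalues exactly on the unit circle, where a nontrivial Jordan block yields polynomial (hence unbounded) growth of $\|A^t v\|$; I would therefore fix the convention that ``unstable'' means $|\lambda|>1$ and note that, under the standing assumption that $A$ has no eigenvalues on the unit circle (the hyperbolic case relevant to \cref{thm:PAt}), the two cases partition cleanly and the stated ``if and only if'' holds without caveat. A secondary technical point is the treatment of complex eigenvalues, which I would handle by working with the real Jordan form so that the eigenvector-direction arguments produce genuine real initial differences $\Delta x_0$.
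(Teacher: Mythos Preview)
Your reduction to the autonomous difference dynamics $\Delta x_{t+1}=A\Delta x_t$ and the subsequent spectral/Jordan-form analysis is correct, and your flagging of the unit-circle borderline case is appropriate (the lemma as stated in the paper is indeed somewhat loose there). The paper itself gives essentially no proof: it says only that the result ``is straightforward and follows from the definition and the direct method of Lyapunov.'' So while the paper gestures at a quadratic Lyapunov argument (e.g., solving $A^TPA-P=-Q$ for the IES direction and invoking an instability theorem for IU), you instead work directly with $\|A^t\|$ and $\|A^t v\|$ via the Jordan form and the spectral-radius formula. Both routes are standard and equivalent in difficulty; yours is arguably more transparent for this particular statement and has the added benefit of making the unit-circle subtlety explicit rather than hiding it behind a Lyapunov-equation solvability hypothesis.
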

\begin{proof}
The proof is straightforward and 
follows from the definition and the direct method of Lyapunov. 
\end{proof}

This allows us to directly capture conditions for stealthy yet effective attacks on LTI systems. 
\begin{corollary}
\label{cor:cor1}
The LTI system~\eqref{eq:lti} is ($\epsilon,\alpha$)-attackable for arbitrarily large $\alpha$ if the matrix $A$ is unstable and the closed-loop control system is asymptotically stable. 
\end{corollary}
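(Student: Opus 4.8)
The plan is to obtain this as a direct specialization of Theorem~\ref{thm:PAt}, using Lemma~\ref{lemma:LTI} to translate the two abstract incremental hypotheses of that theorem into the spectral conditions stated here. Before doing so, I would observe that for an LTI system both incremental exponential stability and incremental instability are \emph{global} properties: by linearity the difference of any two trajectories driven by the same input obeys an autonomous LTI recursion that is independent of the operating point. Consequently the qualifier ``in the set $\mathbf{S}$'' appearing in Theorem~\ref{thm:PAt} is automatically met whenever the corresponding global property holds, so I need only verify the spectral conditions and then invoke the main theorem.

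First I would handle the open-loop (plant) requirement. Writing the input-to-state form~\eqref{eq:input-state} for the LTI plant in~\eqref{eq:lti} gives $x_{t+1}=Ax_t+\begin{bmatrix}B&I\end{bmatrix}U_t$ with $U_t=\begin{bmatrix}u_t^T&w_t^T\end{bmatrix}^T$. Since augmenting the input matrix to $\begin{bmatrix}B&I\end{bmatrix}$ does not change the state matrix, and since the difference of two same-input trajectories satisfies $\Delta x_{t+1}=A\Delta x_t$, the instability part of Lemma~\ref{lemma:LTI} applies verbatim: this system is IU if and only if $A$ has an eigenvalue outside the unit circle, i.e., exactly when $A$ is unstable. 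Thus the hypothesis that $A$ is unstable yields that~\eqref{eq:input-state} is IU on $\mathbf{S}$.

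Next I would handle the closed-loop requirement. Eliminating $y_t^c$ and $u_t$ from~\eqref{eq:lti} and stacking $\mathbf{X}=\begin{bmatrix}x^T&\mathpzc{X}^T\end{bmatrix}^T$ casts the closed loop into the LTI form~\eqref{eq:closed-loop} with an augmented state matrix $\mathbf{A}_{cl}$ assembled from $A,B,C,A_c,B_c,C_c$. By definition, asymptotic stability of the closed loop means the spectrum of $\mathbf{A}_{cl}$ lies strictly inside the unit circle, so the stability part of Lemma~\ref{lemma:LTI} gives that the closed-loop system is IES (globally, hence on $\mathbf{S}$). With both hypotheses of Theorem~\ref{thm:PAt} now verified, I would invoke that theorem to conclude that the system is $(\epsilon,\alpha)$-attackable for arbitrarily large $\alpha$, which is the claim.

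The step requiring the most care is the closed-loop one: forming $\mathbf{A}_{cl}$ explicitly and confirming that its eigenvalues are precisely the closed-loop poles whose location characterizes asymptotic stability, so that ``asymptotically stable'' and ``all eigenvalues of $\mathbf{A}_{cl}$ inside the unit circle'' coincide. This is routine linear algebra, and it is the only place where the one-step delay in the controller state update in~\eqref{eq:lti} must be tracked carefully when assembling the augmented dynamics. Everything else is immediate bookkeeping, and no estimates beyond those already established in Theorem~\ref{thm:PAt} are required.
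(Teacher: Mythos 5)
Your proposal is correct and follows essentially the same route as the paper: the paper's proof is a one-line appeal to combining Theorem~\ref{thm:PAt} with Lemma~\ref{lemma:LTI}, and your argument simply carries out that combination explicitly (checking that instability of $A$ gives IU of the open-loop form~\eqref{eq:input-state} and that asymptotic stability of the closed loop gives IES of~\eqref{eq:closed-loop}, with the globality observation disposing of the restriction to $\mathbf{S}$). The added detail is sound but does not constitute a different approach.
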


\begin{proof}
The proof is directly obtained\textbf{} by combining Theorem~\ref{thm:PAt} and Lemma~\ref{lemma:LTI}. 
\end{proof}

Asymptotic stability of the closed-loop system is not a restrictive assumption as stability is commonly the weakest required performance guarantee for a control system. Matrix $A$ being unstable is a necessary and sufficient condition for satisfying ($\epsilon,\alpha$)-attackability when any set of sensors can be compromised. Note that the  ($\epsilon,\alpha$)-attackability condition for LTI systems with an optimal detector 
complies with the results from~\cite{mo2010false,jovanov_tac19} where LQG controllers with residue based detectors (e.g., $\chi^2$ detectors) have been considered.   

\begin{remark}
The false-date injection attack sequence design method from~\eqref{eq:attack_seq} will reduce into a simple dynamical model
\begin{equation}
\begin{split}
s_{t+1}&=Ax_t^a+Bu_t^a - (A(x_t^a-s_t)+Bu_t^a)=As_t \\
a_t&=C(x_t^a-s_t)-C(x_t^a)=-Cs_t,
\end{split}
\end{equation}
that only requires knowledge about the matrices $A$ and $C$. In addition, unlike the case for nonlinear systems, there is no need to have access to the actual states of the system.  
\end{remark}

\begin{remark}
In~\cref{sec:attack_model} we assumed that $\mathcal{K}=\mathcal{S}$; i.e., the attacker can compromise all sensors. However, when the system is LTI, the minimum subset of compromised sensors can be obtained as
\begin{equation}
\min_{v_i\in \{v_1,...,v_q\}} \Vert \text{supp} (Cv_i)\Vert_0, 
\end{equation}
where $ \{v_1,...,v_q\}$ denotes the set of unstable eigenvectors of the matrix $A$, and $\text{supp}$ denotes the set of nonzero elements of the vector. 
\end{remark}

\section{Simulation Results}
\label{sec:simulation}
We illustrate 
our results 
on a 
case-study. 
%
Specifically, we consider a fixed-base inverted pendulum equipped with an Extended Kalman Filter to estimate the states of the system followed by a feedback full state controller to keep the pendulum rod in the inverted position. 
Using $x_1=\theta$ and $x_2=\dot{\theta}$, 
the inverted pendulum dynamics can be modeled~as
\begin{equation} \label{eq:state_Nonlin}
\begin{split}
\dot{x}_1&=x_2\\
\dot{x}_2&=\frac{g}{r}\sin{x_1} - \frac{b}{mr^2} x_2+ \frac{L}{mr^2};
\end{split}
\end{equation}
here, $\theta$ is the angle of
pendulum rod from the vertical axis measured clockwise, $b$ is the Viscous friction coefficient, $r$ is the radius of inertia of the pendulum about the fixed point, 
$m$ is the mass of the pendulum, $g$ is the acceleration due to gravity, and $L$ is the external torque that is applied at the fixed base. We assumed that both the states are measured by sensors.
Finally, we assumed $g=9.8$, $m=.2 Kg$, $b=.1$, $r=.3m$, $\Sigma_w=\Sigma_v=\begin{bmatrix}.01&0\\0&.01\end{bmatrix}$ and discretized the model with $T_s=10~ms$. 
We assume the safe region for angle around the equilibrium point $\theta=0$ is $\mathbf{S}=(-\frac{\pi}{3},\frac{\pi}{3})$. To detect the presence of attack, we designed a standard $\chi^2$-based anomaly detector that receives the sensor values and outputs the residue/anomaly alarm.    
%

We used the attack model considered in~\eqref{eq:attack_seq} to generate the sequence of false-data injection attacks over time. Fig.~\ref{fig:attack}(a) shows the angle of the pendulum pod over time. Before the attack starts at time zero, the pendulum pod is around the angle zero; however, after initiating the attack it can be observed that the absolute value of the angle increases over time until it leaves the safe set and even becomes more than $\pi$. Note that having values more than $\pi$ does not make a difference because we have a periodic system, {and $\pi$ corresponds to the pendulum falling down}.   Meanwhile, the distribution of the norm of the residue signal (see Fig.~\ref{fig:attack}(b)) does not change before and after attack initiation -- i.e., the attack remains~stealthy.



\begin{figure}[!t]
\centering
{\label{fig:theta}
  \includegraphics[clip,width=0.484\columnwidth]{./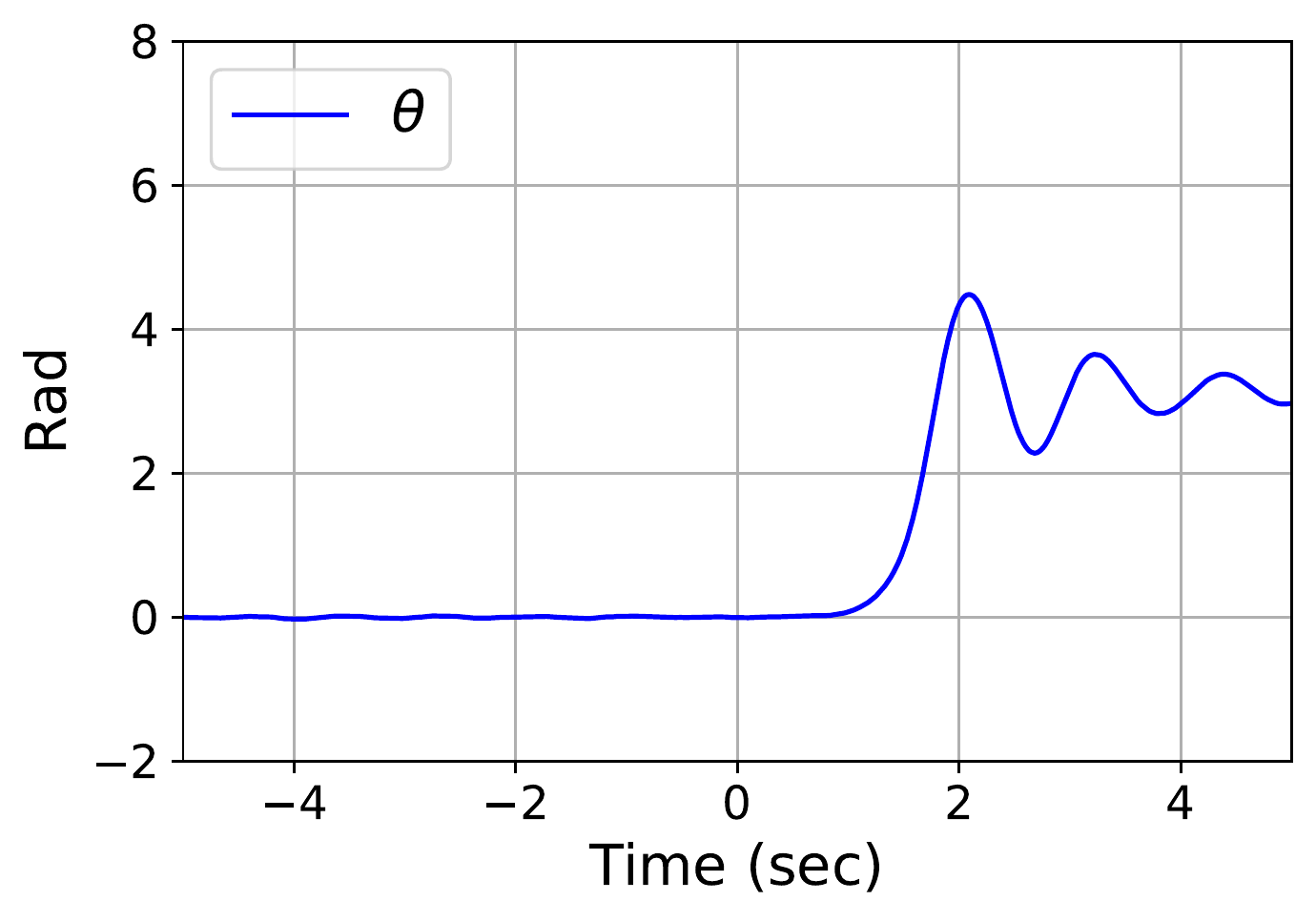}
}
{\label{fig:residue}
  \includegraphics[clip,width=0.484\columnwidth]{./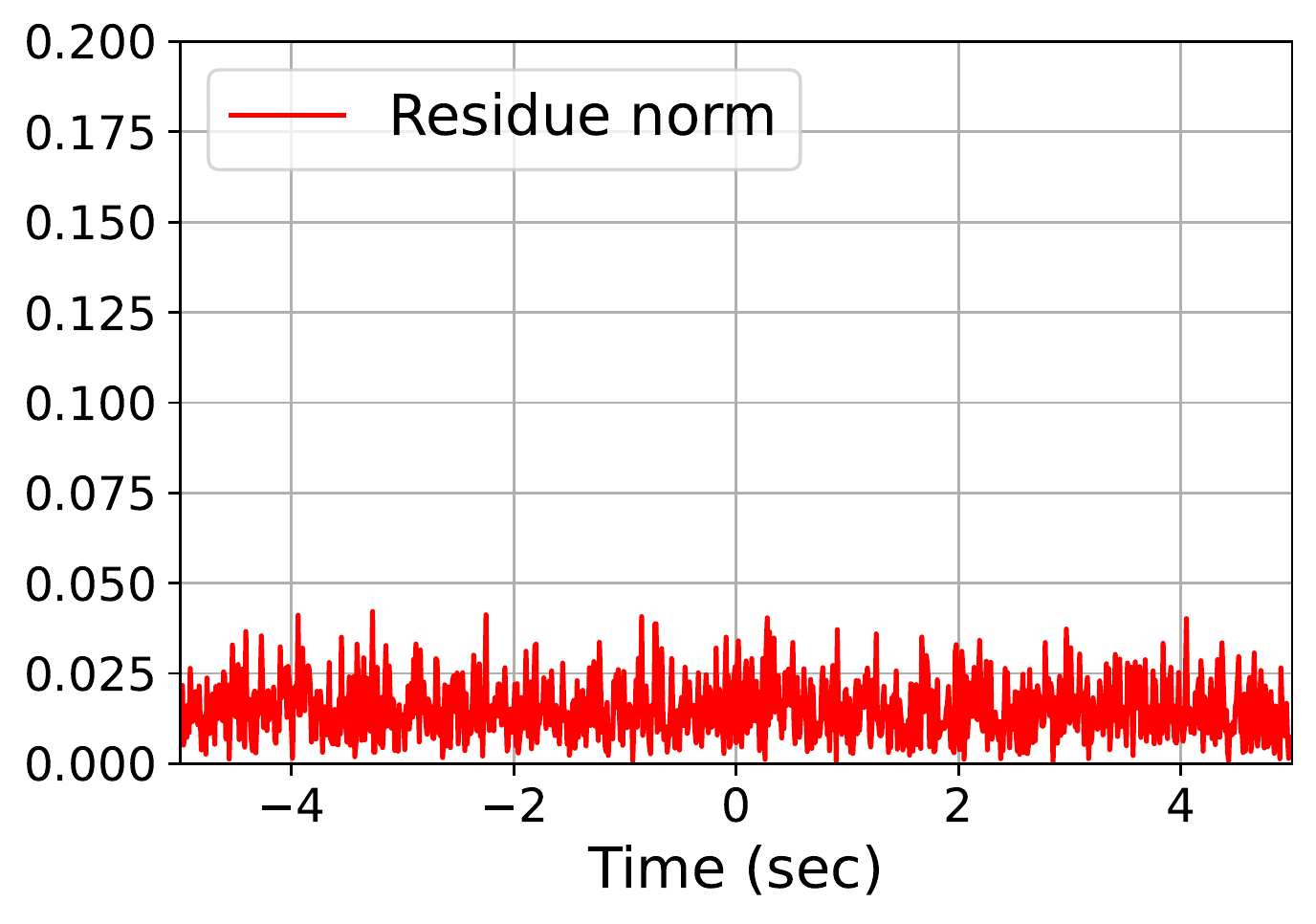}%
}

\caption{(a) Angle's ($\theta$) absolute value over time for the under-attack system, when the attack starts at time zero; (b) The residue norm over time for the under-attack system, when the attack starts at time zero. }\label{fig:attack}
\end{figure}

\section{Conclusion and Future Work}
\label{sec:concl}
In this paper, we have considered the problem of vulnerability analysis for nonlinear control systems with Gaussian noise, when attacker can compromise sensor measurements from any subset of sensors. Notions of strict stealthiness and $\epsilon$-stealthiness have been defined, and we have shown that these notions are independent of the deployed intrusion detector. Using the KL-divergence, 
we have presented conditions for the existence of stealthy yet effective attacks. Specifically, we have defined the $(\epsilon,\alpha)$-successful attacks where the goal of the attacker is to be $\epsilon$-stealthy while moving the system states into an unsafe region, determined by the parameter $\alpha$. We have then derived a condition for which there exists a sequence of such $(\epsilon,\alpha)$-successful false-data injection  attacks. In particular, we showed that if the closed-loop system is incrementally exponentially stable and the open-loop system is incrementally unstable, then there exists a sequence of $(\epsilon,\alpha)$-successful attacks. We also provided the results for LTI systems, showing that they are compatible with the existing results for LTI systems and $\chi^2$-based detectors. 

Our results assume that the attacker has knowledge of the state evolution function $f$, as well as access to the values of the actual system states and the control inputs during the attack. Future work will be directed toward deriving conditions when the attacker has limited knowledge about the states, control input and the function $f$. We will also study the effects of specific previously reported  attacks (e.g., replay attack) on general nonlinear control systems using the defined notions of strict and $\epsilon$-stealthiness.

\bibliographystyle{IEEEtranMod}


\end{document}